\newtheorem{theorem}{Theorem}
\newtheorem{lemma}{Lemma}
\newtheorem{remark}{Remark}
\newtheorem{construction}{Construction}
\newtheorem{definition}{Definition}
\newcommand{\etal}{{\it et al.}}
\newcommand{\ie}{{\it i.e.}}
\newcommand{\eg}{{\it e.g.}}
\newcommand{\rank}{\textrm{rank}}
\newcommand{\twomatrix}[2]{\begin{bmatrix} #1 \\ #2\end{bmatrix}}
\newcommand{\Sf}{S} 
\newcommand{\X}{X} 
\newcommand{\al}{\alpha} 
\newcommand{\be}{\beta} 
\newcommand{\GF}[1]{\mathbb{F}_{#1}} 
\newcommand{\E}{E} 
\newcommand{\G}{\mathcal{G}} 
\newcommand{\SG}{S_{\mathcal{G}}} 
\newcommand{\Bs}{B_s} 
\newcommand{\ELL}{\mathcal{L}} 
\newcommand{\CL}{C_{\mathcal{L}}} 
\newcommand{\I}[2]{I\left(#1;#2\right)} 
\newcommand{\code}{\mathcal{C}} 
\newcommand{\qr}{q_r}
\newcommand{\Pij}[2]{P_{#1,#2}} 
\newcommand{\pij}[2]{p_{#1,#2}} 
\newcommand{\Gij}[2]{G_{#1,#2}} 
\newcommand\blfootnote[1]{%
  \begingroup
  \renewcommand\thefootnote{}\footnote{#1}%
  \addtocounter{footnote}{-1}%
  \endgroup
}
\title{\LARGE \bf
Universally Weakly Secure Coset Coding Schemes for Minimum Storage Regenerating (MSR) Codes}
\author{{Swanand Kadhe and Alex Sprintson}}
\begin{document}

\maketitle

\begin{abstract}
We consider the problem of designing codes for distributed storage that protect user data against eavesdroppers that can gain access to network links as well as individual nodes. Our goal is to achieve weak security (also known as block security) that requires that the eavesdroppers would not be able to decode individual files or combinations of a small number of files. The standard approach for achieving block security is to use a joint design scheme that consists of (inner) storage code and the (outer) coset code. However, jointly designing the codes requires that the user, who pre-processes and stores the files, should know the underlying storage code in order to design the (outer) linear transformation for achieving weak security. In many practical scenarios, such as storing the files on the third party cloud storage system, it may not be possible for the user to know the underlying storage code.
  
In this work, we present universal schemes that separate the outer code design from the storage code design for minimum storage regenerating codes (MSR). Our schemes allow the independent design of the storage code and the outer code. Our schemes use small field size and can be used in a broad range of practical settings.
\end{abstract}


\blfootnote{Swanand Kadhe and Alex Sprintson are with the Department of Electrical and Computer Engineering at Texas A\&M University, USA; Emails:\{swanand.kadhe, spalex\}@tamu.edu.\\
}

%

\section{Introduction}
Coding for distributed storage systems (DSS) has recently received significant attention from the research community. The main focus has been on designing and analyzing novel erasure codes that efficiently handle node failures in distributed storage systems, see, \eg,~\cite{Dimakis:10,Dimakis-Survey:11,Gopalan:11,Papailiopoulos:14}.

An important challenge for a DSS is providing secrecy against eavesdropping. The problem with using conventional secret key-based encryption techniques is that they require secret key management mechanisms, which incur significant computational and communication overheads in distributed settings.  
Following the work of~\cite{Pawar:10, Pawar:11}, a number of investigations have been carried out on information-theoretically securing the regenerating codes, see, \eg,~\cite{Shah:11,Zhu:13,Rawat:14,Goparaju:15,Mazumdar:16,Rawat:16,KadheS:17}.

Most of the security results in distributed storage community are focused on the paradigm of information-theoretic {\it perfect secrecy}. Intuitively, perfect secrecy requires that the eavesdropper gains absolutely {\it no information} about the stored data from its observations. 
To be precise, suppose that a DSS is storing $\Bs$ data files $\Sf = \{S_1,\ldots,S_{\Bs}\}$, where each file can be considered as a symbol in a finite field $\GF{q}$. Let $\E$ denote the set of (encoded) files that an eavesdropper {\it Eve} can observe. A DSS is said to be \emph{perfectly secure} if the mutual information between the message symbols $\Sf$ and the eavesdropped symbols $\E$ is zero, \ie, $\I{\Sf}{\E} = 0$. 

For many practical storage systems, perfect secrecy condition might be too strong. Moreover, coding schemes that provide perfect secrecy involve mixing data symbols with random keys to confuse the eavesdropper, which incurs loss in the storage capacity. Considering these drawbacks of the perfect secrecy notion, we focus on the notion of {\it weak security} proposed by Bhattad and Narayanan~\cite{Bhattad:05}.

The weak security condition requires that Eve cannot gain any information about {\it any} group of files of size $g$, where $g$ is some positive integer. Based on the premise that individual files carry meaningful information, the motivation behind weak security is that, even if Eve obtains any $g-1$ files as a side information, she cannot decode for any new file. 
For example, let the number of files be \mbox{$\Bs = 4$,} and suppose the files are chosen independently and uniformly at random over $\GF{5}$. 
Suppose that Eve observes the following two encoded symbols \mbox{$\E  = \{S_1 + S_2 + S_3 + S_4,\: S_1 + 2S_2 + 3S_3 + 4S_4\}$.} 
The scheme protects any group of $g = 2$ files. This ensures that, even if Eve has a side-information of any one file, she cannot decode for any other file by observing $E$. Note that, when $g = 1$, weak security requires that Eve gains no information about any individual file, \ie, $\I{S_i}{E} = 0$ $\forall i$.

It was observed in~\cite{Bhattad:05} that weakly secure coding schemes do not incur loss in the capacity, as it is not required to mix any {\it private randomness}. Essentially, a weakly secure scheme protects a group of files by using the other files as random keys.

Note that the notion of weak security that is introduced in~\cite{Bhattad:05} and considered throughout this paper, is different from the conventional notion of information-theoretic {\it weak secrecy}, which is defined for asymptotically large block-lengths. The weak security notion considered in this paper is applicable to finite block-lengths as well. The notion of weak security has also been referred to as {\it block security}, as it requires protecting blocks of information of different sizes (see, \eg~\cite{Dau:13}). 

Despite of its practical benefits, there have been relatively very few attempts on employing weak security for DSS. In~\cite{Oliveira:12}, Oliveira \etal~have presented a construction of weakly secure erasure codes for DSS without considering the regeneration aspects. Dau \etal~\cite{Dau:13} analyzed the weak security properties of two families of regenerating codes: regular-graph codes~\cite{Rashmi:09} and product-matrix codes~\cite{Rashmi:11}. Going a step ahead, in~\cite{Kadhe:14,Kadhe:14Allerton}, we presented outer code constructions that weakly secure product-matrix (PM) codes~\cite{Rashmi:11}. 



In all these solutions, the standard approach for achieving weak security is to jointly design an outer coset code and an inner storage code. The main limitation of jointly designing the codes is that it requires the user, who  designs the (outer) linear transformation for achieving weak security, to know the underlying storage code. In many practical scenarios, such as storing the files on the third party cloud storage system, it may not be possible for the user to know the underlying storage code.

In this paper, we focus on {\it universal} schemes that separate the outer code design from the storage code design. Such a universal approach was first proposed in~\cite{Silva:08} (see also~\cite{Silva:11}) to achieve perfect security in network coding. This approach was extended for weakly securing network codes in~\cite{Silva:09,Kurihara:12}, and was adapted for weakly securing distributed storage codes, namely regenerating codes, in~\cite{Kurihara:13}. The idea in these works is to design an outer code based on rank-metric codes~\cite{Gabidulin:85}. However, the main drawback of using a rank-metric code is that the required field size is significantly large. In particular, the universal outer code of~\cite{Kurihara:13} requires the field size of $q^{2B}$, where $q$ is the field size of the underlying storage code and $B$ is the total number of information symbols stored.

The question we ask in this paper is that if, instead of designing a universal outer code for weakly securing {\it any} storage code, if we restrict to securing a particular class of storage codes, can we design outer codes over small field size. We answer this question affirmatively by considering an important class of storage codes called minimum storage regenerating (MSR) codes~\cite{Dimakis:10}. Essentially, an MSR code is a maximum distance separable (MDS) code that minimizes the amount of data  downloaded while repairing a failed node. 

{\it Our Contributions:} First, we present a construction of universal outer code that can achieve weak security of individual symbols, \ie, $g = 1$, in any $(n,k)$ MSR code against an eavesdropper that can observe any $k-1$ storage nodes. The required field size of the scheme is $O(B^k)$, where $B$ is the number of stored information symbols. Next, we present a construction of universal outer code that can achieve weak security with maximum possible $g$ in any MSR code against an eavesdropper that can observe any single storage node. The required field size of the scheme is $O(B^{\alpha})$, where $B$ is the total number of stored information symbols and $\alpha$ is the number of symbols stored on each node.

\section{Preliminaries}
\label{sec:preliminaries}

\subsection{Regenerating Codes}
\label{sec:regen-codes}
Consider a DSS that stores a set of $B$ files given as $\Sf = \{S_1,\ldots,S_B\}$, where each file can be considered as a uniformly and independently drawn symbol from a finite field $\GF{q}$. The system contains $n$ storage nodes, with each node capable of storing $\alpha$ files. An $(n,k,d,\alpha,\beta)$ regenerating code encodes the $B$ files into $n\alpha$ files over $\GF{q}$ in such a way that it satisfies the following two properties:
(i) {\it reconstruction property} -- a {\it data collector} (DC) connecting to \emph{any} $k$ out of $n$ nodes can reconstruct the entire set of files; (ii) {\it regeneration property} -- when a storage node  fails, it can be \emph{regenerated} by adding a new node which downloads $\beta$ symbols each from any $d$ out of the remaining $n - 1$ nodes. The $d$ nodes participating in node repair are referred to as the \emph{helper nodes}, and the $d\beta$ number of symbols downloaded is referred to as the \emph{repair bandwidth}. 

Using the cut-set bounds, the capacity of an optimal $(n,k,d,\al,\be)$ regenerating code is bounded  as~\cite{Dimakis:10}
\begin{equation}
\label{eq:capacity}
B \leq \sum_{i=0}^{k-1}\min\{\alpha,(d-i)\beta\}.
\end{equation}
It is easy to see that there is trade-off between storage space per node $\alpha$ and repair bandwidth $d\beta$. Most of the results in the literature focus on the two extreme points of the optimal storage-repair bandwidth trade-off curve. The codes on one extreme point that minimize the repair bandwidth first and then the storage per node are referred to as {\it Minimum Bandwidth Regenerating} (MBR) codes; whereas, the codes on the other extreme point that first minimize the storage per node and then the repair bandwidth are referred to as {\it Minimum Storage Regenerating} (MSR) codes. Several explicit code constructions have been proposed for these extreme points for {\it exact} repair model, wherein the repaired node is an exact replica of the failed node (see, \eg,~\cite{Dimakis-Survey:11,Rashmi:11,YeB:17, Sasidharan:17} and references therein). In this paper, we focus on the codes at the MSR point. Note that, for MSR codes, $B = k\al$, and these codes can be considered as MDS codes with minimum repair bandwidth. 

\subsection{Eavesdropper Model}
\label{sec:Eve-model}
We assume that an eavesdropper {\it Eve} can access the data stored in any $\ell$ $(< k)$ storage nodes. 
Further, we assume that Eve is passive, has unbounded computational power, and has the knowledge of the coding scheme being used.

It is worth pointing out that, for an MSR code, the number of downloaded symbols $(d\beta)$ is strictly greater than the number of stored symbols $(\alpha)$. 
Therefore, Eve can potentially gain more information by observing the data downloaded during node repair than merely observing the data stored on the node. This motivates a generalized eavesdropper model for a DSS, called as the $(l_1,l_2)$-eavesdropper model, where, Eve can access the data stored on any $l_1$ nodes, and the data downloaded during the regeneration of any $l_2$ nodes (see~\cite{Shah:11,Rawat:14}). Our focus is on the case $l_1 = \ell$, $l_2=0$.

\subsection{Information-theoretic Secrecy}
\label{sec:secrecy}
Suppose we need to store a set $\Sf$ of $\Bs$ files {\it securely}, where $\Bs \leq B$. 
Let $\E$ denote the set of (coded) files observed by Eve. A DSS is said to be {\it perfectly secure} if $\I{\Sf}{\E} = 0$. Under this requirement, Pawar {\it et al.}~\cite{Pawar:10} characterized an upper bound on the {\it secrecy capacity} as:
\begin{equation}
\label{eq:capacity-secure}
\Bs \leq \sum_{i=\ell}^{k-1}\min\{\alpha,(d-i)\beta\}.
\end{equation}
Comparing \eqref{eq:capacity} and \eqref{eq:capacity-secure}, we can say that in a perfectly secure DSS, the $\ell$ nodes that are accessed by the eavesdropper cannot effectively contain any useful information. Consequently, the perfect secrecy requirement results in a loss of storage capacity, {\it i.e.}, $\Bs < B$. 


In this paper, we focus on a relaxed, yet practically appealing notion of {\it weak security}~\cite{Bhattad:05}. The weak security condition demands that all small groups of files of bounded size are protected from the eavesdropper. The bound on group size is given by a parameter $g$. We define the weak security in the following.

\begin{definition}
\label{def:weak-security} 
Let $S = \left[S_1 \:\: S_2\:\: \cdots \:\: S_{\Bs}\right]$ be the set of $\Bs$ files, each one chosen independently and uniformly at random over some finite field $\GF{q}$. Let \mbox{$\SG := \{S_i : i\in\G\}$} for some set $\G\subset[\Bs]$, where $[\Bs] := \{1,\ldots,\Bs\}$. Consider a DSS encoding $S$ into $n\al$ codes files as \mbox{$C = f(S)$,} where \mbox{$f:\GF{q}^{\Bs} \rightarrow \GF{q}^{n\al}$} is some (potentially stochastic) encoder. The coded files $C$ are stored across $n$ nodes as \mbox{$C = \left[C_1\:\: C_2\:\: \cdots \:\: C_n\right]$,} where $C_i$ is the set of $\al$ files stored on node $i$. Suppose Eve observes a set $\ELL\subset[n]$ of $\ell$ storage nodes and let $\CL = \cup_{i\in\ELL}C_{i}$. Then, the encoder $f$ is said to be {\it $g$-weakly secure} against an eavesdropper of strength $\ell$, if, for every $\ELL\subset[n]$ such that $|\ELL|\leq\ell$, we have, 
\begin{equation}
\label{eq:g-block-secure}
\I{\SG}{E}  =  0 \:\: \forall{\G}\subseteq[\Bs] : \left|\G\right| \leq g.
\end{equation}
\end{definition}
\vspace{2mm}

Note that when $g = 1$, we have $\I{S_i}{\CL} = 0$, $\forall i\in[\Bs]$. Assuming that individual files carry meaningful information, $1$-weak security prevents eavesdropper from gaining any meaningful information. 

It is easy to verify that condition~\eqref{eq:g-block-secure} is equivalent to the following condition~\cite[Proposition~5]{Silva:09}:
\begin{equation}
\label{eq:g-secure}
\I{S_i}{\CL | \SG}  =  0 \:\: \forall i\in[\Bs]\setminus\G,\: \forall\G\subset[\Bs] : \left|\G\right| \leq g-1.
\end{equation}
In other words, when the system is $g$-weakly secure, even if Eve obtains any $g-1$ files as a side information, she cannot decode any additional file. We note that, in~\cite{Bhattad:05}, the weak security notion was proposed using condition~\eqref{eq:g-secure}.

\begin{remark}
\label{rem:max-g}
As noted in~\cite{Silva:09}, $g$-weak security is equivalent to perfect security of a {\it message} $S' = \SG \subset \Sf$ for any $\G$ of size up to $g$ (see~\eqref{eq:g-block-secure}). In particular, if we treat $\SG$ as a message and the rest of the symbols as random keys (for any $\G$ of size up to $g$), then \eqref{eq:g-block-secure} is equivalent to the perfect secrecy of $\SG$. Therefore, it is possible to store $B$ symbols with $g$-weak security against an eavesdropper observing any $\mu$ coded symbols, only if $g\leq B - \mu$. 
\end{remark}

\section{Outer Coset Code for Weak Security}
\label{sec:securing-PM-MSR}
Our approach to achieve weak security is to use an outer code construction based on {\it coset coding}~\cite{Ozarow:84}. We briefly review the coset coding in this section. A coset code is constructed using a $(B,B-\Bs)$ linear code $\code_s$ over $\GF{q}$ with parity-check matrix $H\in\GF{q}^{\Bs\times B}$. Specifically, the set of $\Bs$ files $\Sf$ is encoded by selecting uniformly at random some $\X\in\GF{q}^B$ such that $\Sf = H\X$. In other words, the vector $\Sf$ can be considered as a syndrome specifying a coset of $\code_s$, and the codeword $\X$ is a randomly chosen element of that coset. 

Next, the codeword $X$ is encoded using a regenerating code (as an inner code) to obtain $C\in\GF{q}^{n\al}$, \ie, $C = G\X$, where $G\in\GF{q}^{n\alpha\times B}$ is a generator matrix of the regenerating code. To obtain the information symbols $\Sf$, a user needs to first decode the regenerating code to get $\X$, 
and then, decode the outer coset code to get $\Sf$. The decoding operation of a coset code consists of simply computing the syndrome $\Sf = H\X$. The node repair process is inherited from the regenerating code.

For the rest of the paper, for simplicity, we refer to $H$ as a coset code and $G$ as a storage/MSR code.

To design the matrix $H$ appropriately, we need to transform the weak security condition~\eqref{eq:g-block-secure} into a condition involving $H$. For this, we use the following result from~\cite[Lemma 6]{Silva:11}, which is a generalization of~\cite[Theorem 1]{Salim:07}.
\begin{lemma}
\label{lem:silva}
(\cite{Silva:11}) Suppose that a coset code with parity-check matrix $H\in\GF{q}^{\Bs\times B}$ is used as an outer code over a storage code with generator matrix $G$ to store the message \mbox{$\Sf = [S_1\:\cdots\: S_{\Bs}]$.} Suppose each symbol $S_i$ for $i \in [\Bs]$ is chosen independently and uniformly at random over some finite alphabet. Let $\CL = G'\X$ be the $\mu$ symbols observed by an eavesdropper, where $G'$ is a $\mu\times B$ sub-matrix of $G$. Then, for any $\G\subseteq[\Bs]$ such that $\left|\G\right| \leq B - \mu$, we have
\begin{equation}
\label{eq:subspace-condition-0}
\I{\SG}{\CL} = \rank\: H_{\G} + \rank\: G' - \rank\:\twomatrix{H_{\G}}{G'},
\end{equation}
where $H_{\G}$ is a sub-matrix of $H$ formed by choosing the rows indexed by the set $\G$.
\end{lemma}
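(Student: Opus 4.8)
The plan is to translate the mutual-information identity into three separate entropy terms, each of which reduces to the rank of a matrix, and then recombine them. Throughout I measure entropy and mutual information in $q$-ary units (logarithms to base $q$), which is exactly why no factor of $\log q$ appears on the right-hand side of~\eqref{eq:subspace-condition-0}. Writing the two observations as $\SG = H_\G\X$ and $\CL = G'\X$, the defining identity for mutual information gives $\I{\SG}{\CL} = \mathrm{H}(\SG) + \mathrm{H}(\CL) - \mathrm{H}(\SG,\CL)$, where $\mathrm{H}(\cdot)$ denotes Shannon entropy (not to be confused with the matrix $H$). It therefore suffices to evaluate $\mathrm{H}(\SG)$, $\mathrm{H}(\CL)$, and the joint entropy $\mathrm{H}(\SG,\CL) = \mathrm{H}\!\left(\twomatrix{H_\G}{G'}\X\right)$.

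The first step I would take is to pin down the distribution of $\X$. Since every $S_i$ is chosen independently and uniformly, the syndrome vector $\Sf$ is uniform over $\GF{q}^{\Bs}$; for this to be feasible the parity-check matrix $H$ must have full row rank $\Bs$. Given full row rank, the solution sets $\{\X : H\X = \Sf\}$ are cosets of a single $(B-\Bs)$-dimensional subspace, hence each has size $q^{B-\Bs}$, and they partition $\GF{q}^B$. Drawing $\Sf$ uniformly and then $\X$ uniformly inside the corresponding coset therefore induces the uniform law on $\X$ over all of $\GF{q}^B$.

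The key fact I would then isolate is elementary: for $\X$ uniform over $\GF{q}^B$ and any $M \in \GF{q}^{m\times B}$, the image $M\X$ is uniformly distributed over the column space of $M$, because each attainable value is the image of exactly $q^{B-\rank M}$ vectors $\X$ (a coset of $\ker M$). Since the column space has $q^{\rank M}$ elements, this yields $\mathrm{H}(M\X) = \rank M$ in $q$-ary units. Applying it with $M = H_\G$, $M = G'$, and $M = \twomatrix{H_\G}{G'}$ gives $\mathrm{H}(\SG) = \rank H_\G$, $\mathrm{H}(\CL) = \rank G'$, and $\mathrm{H}(\SG,\CL) = \rank\twomatrix{H_\G}{G'}$; substituting into the identity above produces exactly~\eqref{eq:subspace-condition-0}.

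I do not expect any single step to be a genuine obstacle, since the core computation is just the entropy-of-a-linear-map identity. The one point that needs care is the distributional reduction in the second step: one must verify that the two-stage sampling (uniform syndrome, then uniform coset representative) really induces the uniform distribution on $\X$, as every subsequent entropy equality rests on it. It is also worth remarking that the identity holds for \emph{arbitrary} $\G \subseteq [\Bs]$; the hypothesis $|\G| \le B - \mu$ is never invoked in deriving~\eqref{eq:subspace-condition-0}, but rather delimits the regime relevant to weak security (cf. Remark~\ref{rem:max-g}), in which one hopes to force the right-hand side to zero by arranging that the row spaces of $H_\G$ and $G'$ intersect trivially.
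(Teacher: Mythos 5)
Your proof is correct, but there is nothing in the paper to compare it against: the paper does not prove Lemma~\ref{lem:silva} at all, it imports the result by citation from~\cite{Silva:11} (Lemma~6 there, itself a generalization of~\cite{Salim:07}), so your derivation fills in an argument the paper leaves entirely to the reference. Each step of your argument holds up. The two-stage sampling (uniform $\Sf$, then a uniform solution of $H\X=\Sf$) does induce the uniform distribution on $\X$ over $\GF{q}^{B}$, because the $q^{\Bs}$ solution cosets partition $\GF{q}^{B}$ into sets of equal size $q^{B-\Bs}$; for uniform $\X$ and any matrix $M$, the entropy of $M\X$ in $q$-ary units is exactly $\rank\: M$, since every attainable value has the same preimage size $q^{B-\rank M}$; and applying this with $M=H_{\G}$, $M=G'$, and $M=\twomatrix{H_{\G}}{G'}$ in the identity $\I{\SG}{\CL}=\mathrm{H}(\SG)+\mathrm{H}(\CL)-\mathrm{H}(\SG,\CL)$ yields~\eqref{eq:subspace-condition-0}. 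Your two side observations are also accurate and worth keeping: the encoding is well defined only when $H$ has full row rank $\Bs$, which is implicit in the paper's description of $H$ as the parity-check matrix of a $(B,B-\Bs)$ code; and the hypothesis $\left|\G\right|\leq B-\mu$ plays no role in establishing the identity itself---it only delimits the regime in which the right-hand side can be made zero, which is exactly how the paper uses the lemma to obtain the design condition~\eqref{eq:subspace-condition}.
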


Then, using~\eqref{eq:g-block-secure} and~\eqref{eq:subspace-condition-0}, it follows that a universal coset code $H$ ensures that
\begin{equation}
\label{eq:subspace-condition}
\rank\:\begin{bmatrix}H_{\G} \\ G' \end{bmatrix} = \rank\: {H_{\G}} + \: \rank\: G',
\end{equation}
for every $\G\subset[\Bs]$ such that $\left|\G\right| \leq g$ and for every storage code $G$. (Recall that $G'$ is a $\mu\times B$ sub-matrix of $G$ corresponding to $\mu$ eavesdropped symbols.)

Such a coset code $H$ was first constructed in~\cite{Silva:09} for $g\leq 2$ using a rank-metric code over $\GF{q^B}$ to secure any $G$ over $\GF{q}$ with application to network coding. This construction was extended for $g\leq \Bs-\mu$ in~\cite{Kurihara:12} again using rank-metric codes, requiring the field size of $q^{2B}$, where $q$ is the field size for the entries of $G$. The authors of~\cite{Kurihara:13} adapted the construction of~\cite{Kurihara:12} to weakly secure any regenerating code. 
The main drawback of such an outer code based on a rank-metric code is its high field size. Instead of securing any regenerating code, we restrict our attention to the class of MSR codes, and present universal outer code constructions over small field size in the next section.


\section{Universal Outer Codes for MSR Codes}
\label{sec:main}

In the following, we present constructions for universal outer codes to achieve $g$-weak security in any $(n,k,d,\al,\be)$-MSR code. 
In particular, we consider the following two scenarios: (i) maximum $\ell$ and minium $g$, \ie, $\ell = k - 1$ and $g = 1$, and (ii) minimum $\ell$ and maximum $g$, \ie, $\ell = 1$ and $g = B - \al$. We assume that the user only knows the code parameters $n,k,d,\al,\be$, and $q$. In addition, we assume that the encoding of the MSR code is systematic. We begin with setting up necessary notation for MSR codes.

{\it Notation for MSR Codes:} Consider an $(n,k,d,\al,\be)$-MSR code $\code$ over $\GF{q}$, storing $B = k\al$ information symbols (see Sec.~\ref{sec:regen-codes}). Let \mbox{$G  = \left[G_1^T \:\: G_2^T \:\: \cdots \:\: G_n^T\right]^T$} be an $n\al\times B$ generator matrix of $\code$, where $G_i$ is an $\al\times B$ matrix corresponding to the symbols stored on node $i$. We refer to $G_i$ as a generator matrix of node $i$. Let us denote $G_i$ as \mbox{$G_i = \left[\Gij{i}{1} \:\: \Gij{i}{2} \:\: \cdots \:\: \Gij{i}{\al}\right]$,} where $\Gij{i}{j}$ is an $\al\times\al$ matrix. 

We assume that $G$ is in systematic form, and the first $k$ nodes are systematic. In other words, we have $\Gij{i}{i} = I_{\al}$ and $\Gij{i}{j} = 0_{\al}$ for $1\leq i,j\leq k$ such that $i\neq j$, where $I_{\al}$ is an $\al\times\al$ identity matrix and $0_{\al}$ is an $\al\times\al$ zero matrix. 

For any matrix (or vector) $W$ with $B = k\al$ columns, we refer to the $\al$ columns of $W$ indexed from $(j-1)\al+1$ through $j\al$, as the $j$-th {\it thick-column} of $W$ $(1\leq j\leq k)$.

\subsection{Construction for $\ell = k-1$ and $g = 1$}

Note that $\ell = k - 1$ is the maximum possible strength that Eve can have for an $(n,k)$-MSR code, as any $k$ nodes recover the entire stored data. 
The motivation behind \mbox{$g=1$} is to protect every individual file, which usually carry meaningful information. The idea for constructing $H$ is to begin with a Vandermonde matrix over some base field and then scale some of its appropriately chosen columns by elements lying in an extension field. The details are given in the following.

\begin{construction}
\label{con:universal-g-1}
Consider the parameters of an MSR code as $n, k, d, \al, \be$ and $q$. Choose $\qr$ as the smallest power of $q$ greater than equal to $B = k\al$. For $1\leq i,j\leq B$, choose the entries $h_{i,j}$ of $H$ as follows:
\begin{equation}
\label{eq:universal-g-1}
h_{i,j}  = 
\left\{
\begin{array}{lr}
\omega^{\frac{j}{\alpha}}\beta_j^{i-1} & \textrm{if} \:\: \alpha\mid j,\\
\beta_j^{i-1} & \textrm{otherwise},
\end{array}
\right.
\end{equation}
where $\beta_1, \cdots, \beta_B$ are $B$ distinct elements of $\GF{\qr}$, and $\omega$ is a primitive element of $\GF{\qr^{k+1}}$.
\end{construction}

Next, we show that the above construction can universally achieve $1$-weak security for any MSR code for $\ell = k-1$.

\begin{theorem}
\label{thm:universal-g-1}
The outer coset code of Construction~\ref{con:universal-g-1} can be used universally with any $(n,k)$-MSR code to store \mbox{$\Bs = k\al$} symbols over $\GF{\qr^{k+1}}$ with $1$-weak security against an eavesdropper observing any $\ell = k-1$ nodes, each storing $\al$ coded symbols.
\end{theorem}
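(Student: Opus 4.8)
The plan is to verify the rank condition~\eqref{eq:subspace-condition} for $g=1$. By Lemma~\ref{lem:silva} and the reduction to~\eqref{eq:subspace-condition}, it suffices to show that for every single row $H_{\{t\}}$ of $H$ (i.e., every $\G$ of size $1$) and every $\mu = (k-1)\al$ rows $G'$ coming from any $k-1$ nodes of any $(n,k)$-MSR code, we have $\rank\twomatrix{H_{\{t\}}}{G'} = \rank H_{\{t\}} + \rank G'$. Since $H_{\{t\}}$ is a nonzero row vector (the $\be_j$ are distinct, so it has rank $1$) and $G'$ has full row rank $(k-1)\al$ (any $k-1$ nodes of an MDS-type code are linearly independent), the condition reduces to showing that the row $H_{\{t\}}$ does not lie in the row space of $G'$. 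Equivalently, I would show that the $\big((k-1)\al+1\big)\times B$ stacked matrix has full row rank for every admissible choice of observed nodes and every $t$.

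First I would exploit the systematic structure set up in the notation. The key structural fact about an $(n,k)$-MSR code is that the row space of any $k-1$ nodes, when intersected with the coordinate description, still leaves ``room'': because $B = k\al$ and the observed rows span only $(k-1)\al$ dimensions, the row $H_{\{t\}}$ can be captured only if it is an $\GF{q}$-linear combination of those rows. The crucial point is that the entries of $G'$ live in the base field $\GF{q}$ (the MSR code is over $\GF{q}$), and in fact, after clearing the systematic part, any linear combination of the rows of $G'$ lies in a $\GF{\qr}$-subspace spanned by vectors whose entries are in $\GF{\qr}$. The Vandermonde-type entries $\be_j^{i-1}$ of $H$ also lie in $\GF{\qr}$. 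The role of the scaling factor $\ome^{j/\al}$ on the thick-column boundaries (the columns with $\al\mid j$) is precisely to break this containment: since $\ome$ is primitive in the degree-$(k+1)$ extension $\GF{\qr^{k+1}}$, the scaled entries do not lie in $\GF{\qr}$, so no nontrivial $\GF{\qr}$-combination of $G'$-rows can reproduce them.

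The heart of the argument, and the step I expect to be the main obstacle, is to make the field-extension degree argument precise and uniform over all MSR codes simultaneously. Concretely, I would suppose for contradiction that $H_{\{t\}} = \sum_m c_m g_m$ where $g_m$ are the rows of $G'$ and $c_m \in \GF{\qr^{k+1}}$, and then project onto the $k$ scaled coordinates (one per thick-column, at positions $j=\al,2\al,\dots,k\al$). This yields $k$ equations of the form $\ome^{s}\be_{s\al}^{\,t-1} = (\text{an }\GF{\qr}\text{-combination of }c_m)$ for $s=1,\dots,k$. Viewing $\{1,\ome,\ome^2,\dots,\ome^{k}\}$ as a set of elements whose $\GF{\qr}$-linear relations are controlled by the minimal polynomial of $\ome$ (degree $k+1$), I would argue that these $k$ powers of $\ome$ together with the constraints force the coefficient vector $(c_m)$ to vanish, contradicting $H_{\{t\}}\neq 0$. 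The delicate bookkeeping is to show that the $\mu$ unknown coefficients $c_m$ cannot conspire across the $k$ thick-column equations for \emph{any} MSR generator $G'$; here I would lean on the systematic form (which pins down the identity blocks $\Gij{i}{i}=I_\al$) to localize each thick-column contribution and reduce the global linear-independence question to a per-thick-column statement, where the degree-$(k+1)$ primitivity of $\ome$ leaves no room for a nontrivial relation. Establishing that this degree bound $k+1$ is exactly what is needed — neither more (which would waste field size) nor less (which would fail) — is the technical crux.
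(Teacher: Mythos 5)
Your reduction and your core mechanism are the right ones: showing each row $H_t$ of $H$ cannot lie in the $\GF{\qr^{k+1}}$-row span of the $(k-1)\al$ observed rows of $G'$, and exploiting the fact that $G'$ has entries in the base field while $\ome$ has degree $k+1$ over $\GF{\qr}$. This is exactly the engine of the paper's proof. But your proposal defers the decisive step (you call it ``the technical crux'' and never carry it out), and the specific plan you sketch for it would fail. Projecting the assumed relation $H_t = \sum_m c_m g_m$ onto \emph{only} the $k$ scaled coordinates $j = \al, 2\al, \dots, k\al$ cannot certify infeasibility: a linear system is never shown inconsistent by exhibiting a solvable subsystem, and once a parity node is among the observed nodes, those $k$ equations in the $(k-1)\al$ unknowns $c_m$ are in general perfectly solvable. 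The unscaled coordinates are indispensable. The correct bookkeeping is: expand $c_m = \sum_{d=0}^{k} c_{m,d}\,\ome^d$ over the basis $\{1,\ome,\dots,\ome^{k}\}$ of $\GF{\qr^{k+1}}$ over $\GF{\qr}$, set $v_d = \sum_m c_{m,d}\, g_m$ (a vector in the row space of $G'$ with entries in $\GF{\qr}$), and match basis coefficients in \emph{every} coordinate of $H_t = \sum_d \ome^d v_d$; it is the unscaled coordinates that force each $v_s$, $s\in[k]$, to vanish off position $s\al$, yielding $v_s = \be_{s\al}^{t-1} e_{s\al}$ where $e_{s\al}$ is a standard basis vector.

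The second missing ingredient is the reconstruction (MDS) property, which you invoke only to get full row rank of $G'$. Full row rank is not enough: a rank-$(k-1)\al$ matrix over the base field can easily contain $e_{s\al}$ in its row space, and then no contradiction arises. The repair is to pick a systematic node $j^* \notin \ELL$ (one exists since $\ell = k-1 < k$); then $v_{j^*} = \be_{j^*\al}^{t-1} e_{j^*\al}$ lies in the row space of $G'$, while invertibility of $\twomatrix{G'}{G_{j^*}}$ (reconstruction from the $k$ nodes $\ELL\cup\{j^*\}$) forces the row spaces of $G'$ and $G_{j^*}$ to intersect trivially --- a contradiction, provided $\be_{j^*\al}\neq 0$ (so the $\be_j$ should be taken nonzero, a caveat the paper's own proof also implicitly needs). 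With these two repairs your route does close, and it is genuinely different from the paper's: the paper splits into the all-systematic and parity cases, uses the reconstruction property to normalize $G'$ into identity-block form via $(G'')^{-1}$, pads $T$ to a square matrix, and shows its determinant is a nonzero polynomial in $\ome$ of degree at most $k$ over $\GF{\qr}$, which cannot vanish because the minimal polynomial of $\ome$ has degree $k+1$. As submitted, however, your proposal is a plan with its central step both unexecuted and, as sketched, unworkable.
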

\begin{proof}
First, note that, since $H$ is a $B\times B$ Vandermonde matrix with some of its columns scaled, it is non-singular, resulting in $\Bs = B = k\al$.

Next, we prove the $1$-weak security. Let $H_i$ denote the $i$-th row of $H$. Let the set of nodes accessed by Eve be $\ELL = \{i_1,\cdots,i_{\ell}\}$ and let $\CL = \{C_i : i\in\ELL\}$. For $g = 1$, we want to prove $\I{S_i}{\CL} = 0$ for every $i\in[B]$ (see~\eqref{eq:g-block-secure}). From \eqref{eq:subspace-condition-0}, this is equivalent to showing that, for every $i\in[B]$, the following matrix is full-rank:
\begin{equation}
\label{eq:T-got-g-1} 
T = \twomatrix{H_i}{G'} = 
\begin{bmatrix}
H_i\\
G_{i_1}\\
G_{i_2} \\
\vdots \\ 
G_{i_{\ell}}
\end{bmatrix}
\end{equation}
Towards proving this, we consider the following two cases.

{\it Case 1:} $\ELL \subset[k]$. In other words, all the nodes Eve observes are systematic. 
Let $j = [k]\setminus\ELL$. Then, the $j$-th thick-column of $G'$ 
is zero. Clearly, $H_i$ cannot be in the row space of $G'$.

{\it Case 2:} $\ELL\not\subset[k]$. In this case, at least one parity node is eavesdropped. Arbitrarily choose an index $j$ such that $j\in[k]\setminus\ELL$. Note that there are at least two such systematic nodes not in $\ELL$. Due to the reconstruction property of MSR codes, $\twomatrix{G'}{G_j}$ should be invertible. As the $j$-th thick-column of $G_j$ is an identity matrix and all its other thick-columns are zero, this implies that the $(k-1)\al\times(k-1)\al$ matrix, say  $G''$, formed by all the thick-columns of $G'$ except the $j$-th one is invertible. Thus, we can perform row operations on $T$ to obtain the following matrix: 
$$T_1 = \twomatrix{H_i}{(G'')^{-1}G'}.$$
It is easy to see that, after reordering the rows and columns of $T_1$, we can get
\begin{equation}
\label{eq:T-semifinal}
T_2 = 
\begin{bmatrix}
H_{i,1} & H_{i,2} & \cdots & H_{i,k-1} & H_{i,k}\\
I_{\al} & 0_{\al} & \cdots & 0_{\al} & \Pij{1}{\al}\\
0_{\al} & I_{\al} & \cdots & 0_{\al} & \Pij{2}{\al}\\
\vdots & \vdots & \ddots & \vdots & \vdots\\
0_{\al} & 0_{\al} & \cdots & I_{\al} & \Pij{k-1}{\al}
\end{bmatrix},
\end{equation}
where \mbox{$H_{i,j} = [\beta_{(j-1)\al+1}^{i-1} \: \beta_{(j-1)\al+2}^{i-1} \: \cdots \: \beta_{j\al-1}^{i-1} \: \omega^j\beta_{j\al}^{i-1}]$,} and $\Pij{i}{j}\in\GF{\qr}^{\al\times\al}$. Now, we obtain a square matrix $T_3$ by appending $T_2$ with the $(\al - 1)\times B$ matrix \mbox{$T'_2 = \left[ 0_{(\al-1)\times(k-1)\al} \:\: I_{\al-1} \:\: 0_{(\al-1)\times 1}\right]$}, where $0_{t\times m}$ is a $t\times m$ all-zero matrix and $I_{\al - 1}$ is an $(\al-1)\times(\al-1)$ identity matrix. 

Using the identity matrix $I_{\al-1}$ in $T'_2$, we eliminate all but the last entry in the $k$-th thick-column of $T_2$ to obtain the following matrix:
\begin{equation}
\label{eq:T-semifinal1}
T_4 = 
\begin{bmatrix}
H_{i,1} & \cdots & H_{i,k-1} & 0_{\al\times(\al-1)} & \omega^j\beta_{j\al}^{i-1}\\ 
I_{\al}  & \cdots & 0_{\al} & 0_{\al\times(\al-1)} & \pij{1}{\al}\\ 
\vdots  & \ddots & \vdots & \vdots & \vdots\\ 
0_{\al}  & \cdots & I_{\al} & 0_{\al\times(\al-1)} & \pij{k-1}{\al}\\ 
0_{(\al-1)\times\al}  & \cdots & 0_{(\al-1)\times\al} & I_{\al-1} & 0_{(\al-1)\times 1} 
\end{bmatrix},
\end{equation}
for some $\pij{i}{\al}\in\GF{\qr}^{\al\times 1}$, $1\leq i\leq k-1$.

\begin{figure*}[!h]
\begin{equation}
\label{eq:T-for-l-1}
T' = \twomatrix{H_{\G}}{G_{j,1}^{-1}G_j}(W')^{-1} = 
\begin{bmatrix}
h_{j_1,1} & h_{j_1,2} & \cdots & h_{j_1,\al} & h_{j_1,\al+1} & h_{j_1,\al+2} & \cdots & h_{j_1,B} \\
\vdots & {} & \ddots & {} & \vdots  & {} & \ddots & \vdots  \\
h_{j_g,1} & h_{j_g,2} & \cdots & h_{j_g,\al} & h_{j_g,\al+1} & h_{j_g,\al+2} & \cdots & h_{j_g,B}\\
\omega & 0 & \cdots & 0 & p_{1,1} & p_{1,2} & \cdots & p_{1,B-\al} \\
0 & \omega & \cdots & 0 & p_{2,1} & p_{2,2} & \cdots & p_{2,B-\al} \\
\vdots & {} & \ddots & {}& \vdots  & {} & \ddots & \vdots  \\
0 & 0 & \cdots & \omega & p_{\al,1} & p_{\al,2} & \cdots & p_{\al,B-\al}
\end{bmatrix}
\end{equation}
\hrulefill
\end{figure*}

Now, the determinant of $T_4$ can be written as a polynomial in $\omega$ as follows:
\begin{equation}
\label{eq:det-T-4}
det(T_4) = \omega^j \be_{j\al}^{i-1} + \cdots .
\end{equation}
Note that $\det(T_4)$ is a non-zero polynomial in $\omega$ with coefficients in $\GF{\qr}$, \ie, $\det(T_4) \in \GF{\qr}[\omega]$. Further, the degree of this polynomial is at most $k$. Since $\omega$ is a primitive element of $\GF{\qr^{k+1}}$, the degree of its minimal polynomial is $k+1$. Thus, $\omega$ cannot be a root of any polynomial in $\GF{\qr}$ of degree at most $k$. Hence, we have $det(T_4) \ne 0$. Therefore, $T_4$ is non-singular, and it follows that $T$ must be full-rank.
\end{proof}

{\it Field Size Comparison:} It was shown in~\cite{Silva:09} a parity-check matrix $H$ of a rank-metric code, in particular a Gabidulin code~\cite{Gabidulin:85}, can be used to achieve $g$-weak security for $g\leq 2$. The field size requirement of such a code is $q^B$, where $q$ is the field of the underlying storage code and $B = k\al$ is the total number of information symbols. Since $q\geq n$ for any known MSR code (see \eg,~\cite{YeB:17,Sasidharan:17}, and references therein), the required field size is $O(n^{k\al})$. 

The proposed construction operates over the field size of $\qr^{k+1}$, where $\qr$ is the smallest power of $q$ greater than or equal to $B$. Assuming that $q = O(n)$ and $B=k\al>q$, the proposed construction requires the field size of at most $O\left((nk\al)^{k+1}\right)$. Note that for high-rate MSR codes, the best known codes have $\al$ to be exponential in $k$, and it is shown that $\al$ needs to be at least exponential in $\sqrt{k}$~\cite{Goparaju:14}. Thus, a rank-metric code based outer code would require significantly larger field size as compared to the proposed scheme.

\subsection{Construction for $\ell = 1$ and $g = B - \al$}

Note that $B - \al$ is the maximum value of $g$, as Eve observes $\al$ symbols when $\ell = 1$ (see Remark~\ref{rem:max-g}). The idea of constructing $H$ is similar to Construction~\ref{con:universal-g-1}. In this case, we begin with a Cauchy matrix over some base field and scale its first $\al$ columns by a primitive element of an extension field. The details are as follows.

\begin{construction}
\label{con:universal-l-1}
Consider the parameters of an MSR code as $n, k, d, \al, \be$ and $q$. Choose $\qr$ as the smallest power of $q$ greater than or equal to $2B$, where $B = k\al$. Construct $H$ as the product of two matrices as 
\begin{equation}
\label{eq:universal-H}
H = H' W',
\end{equation}
where $H'$ is a $B\times B$ Cauchy matrix with each entry chosen from $\GF{\qr}$, and $W'$ is a $B\times B$ identity matrix with its first $\alpha$ columns scaled by $1/\omega$. Here, $\omega$ is a  primitive element of the extension field $\GF{\qr^{\alpha+1}}$. We can view $H$ as follows.
\begin{equation}
\label{eq:universal-H-explicit}
H  = 
\begin{bmatrix}
\frac{h_{1,1}}{\omega} & \cdots & \frac{h_{1,\al}}{\omega} & h_{1,\al+1} & \cdots & h_{1,B} \\
\frac{h_{2,1}}{\omega} & \cdots & \frac{h_{2,\al}}{\omega} & h_{2,\al+1} & \cdots & h_{2,B} \\
\vdots & \ddots & \vdots &\vdots & \ddots & \vdots  \\
\frac{h_{B,1}}{\omega} & \cdots & \frac{h_{B,\al}}{\omega} & h_{B,\al+1} & \cdots & h_{B,B} \\
\end{bmatrix},
\end{equation}
where $\left[h_{i,j}\right]$, $1\leq i,j\leq B$, is a Cauchy matrix.
\end{construction}

Next, we show that the above construction can universally achieve $(B-\al)$-weak security for MSR codes when $\ell = 1$.

\begin{theorem}
\label{thm:secure-capacity-universal}
The outer coset code of Construction~\ref{con:universal-l-1} can be used universally with any $(n,k)$ MSR code to store \mbox{$\Bs = k\al$} symbols over $\GF{\qr^{\alpha+1}}$ with $g$-weak security for $g = B - \al$ against an eavesdropper observing any $\ell = 1$ node storing $\al$ coded symbols.
\end{theorem}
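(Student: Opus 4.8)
The plan is to reduce the weak-security requirement to a rank condition via Lemma~\ref{lem:silva} and then verify that condition for Construction~\ref{con:universal-l-1} using two facts: every square submatrix of a Cauchy matrix is nonsingular, and the primitive element $\omega$ satisfies no polynomial over $\GF{\qr}$ of degree at most $\al$. First I would observe that $H = H'W'$ is nonsingular (both factors are), so $\Bs = B = k\al$ and $\rank H_\G = |\G|$ for every index set $\G$. Since Eve observes a single node $j$ with generator matrix $G' = G_j$, and an MSR code is MDS, $G_j$ has full row rank $\al$, so $\rank G' = \al$. By~\eqref{eq:subspace-condition}, achieving $(B-\al)$-weak security is equivalent to showing $\rank \twomatrix{H_\G}{G'} = |\G| + \al$ for every $\G$ with $|\G| \leq B - \al$. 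It suffices to handle the extremal case $|\G| = B - \al$, where $T = \twomatrix{H_\G}{G'}$ is square of size $B\times B$ and must be shown nonsingular; the smaller cases then follow since the rows of $T$ for a smaller $\G$ are a subset of an independent set.

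Next I would right-multiply $T$ by the rank-preserving matrix $(W')^{-1}$, which scales only the first thick-column by $\omega$. Because $H(W')^{-1} = H'$, the top block becomes the Cauchy rows $H'_\G$ (free of $\omega$), while the bottom block becomes $G_j(W')^{-1}$, carrying $\omega$ only in its first thick-column. I would then split into two cases. When $j$ is a parity node, or the first systematic node, the MDS property forces its first thick-column block $\Gij{j}{1}$ to be invertible; normalizing the bottom by $\Gij{j}{1}^{-1}$ yields exactly the matrix $T'$ of~\eqref{eq:T-for-l-1}, with $\omega I_\al$ in the lower-left block and entries in $\GF{\qr}$ elsewhere. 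When $j$ is a systematic node other than the first, $G_j$ is a coordinate projection (an identity block in one thick-column, zeros elsewhere) whose first thick-column is zero, so $\omega$ vanishes from the bottom block entirely.

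The heart of the argument is to show $\det T' \neq 0$. Viewing $\det T'$ as a polynomial in $\omega$ with coefficients in $\GF{\qr}$, I would argue its degree is at most $\al$, since only the $\al$ columns of the first thick-column carry $\omega$ and each contributes at most one factor, and that its coefficient of $\omega^\al$ is nonzero: that coefficient comes from selecting all $\al$ diagonal entries of the lower-left $\omega I_\al$ block, leaving the complementary $(B-\al)\times(B-\al)$ minor of $H'_\G$ on the last $B-\al$ columns, which is a square submatrix of a Cauchy matrix and hence nonzero. Thus $\det T'$ is a nonzero polynomial in $\omega$ of degree exactly $\al$ over $\GF{\qr}$, and since $\omega$ is primitive in $\GF{\qr^{\al+1}}$ its minimal polynomial has degree $\al+1 > \al$, so $\omega$ is not a root and $\det T' \neq 0$. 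In the remaining systematic case I would instead use the identity block of $G_j$ to clear the corresponding thick-column of $H'_\G$, reducing to the nonsingularity of the $(B-\al)\times(B-\al)$ Cauchy submatrix obtained by deleting that thick-column. I expect the main obstacle to be the bookkeeping that isolates the $\omega^\al$ coefficient and identifies it with a single nonzero Cauchy minor; the case split on the observed node, together with checking that $\qr \geq 2B$ supplies enough distinct elements to define the $B\times B$ Cauchy matrix $H'$, are the remaining technical points.
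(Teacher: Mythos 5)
Your proposal is correct and follows essentially the same route as the paper's proof: normalize by $\Gij{j}{1}^{-1}$ and $(W')^{-1}$ to isolate $\omega$ in an $\omega I_{\al}$ block, expand $\det T'$ as a degree-$\al$ polynomial in $\omega$ whose leading coefficient is the nonzero Cauchy minor $\det\bigl(H'_{\G}(\al+1:B)\bigr)$, and invoke the degree-$(\al+1)$ minimal polynomial of the primitive element $\omega$. The only differences are cosmetic: you spell out the systematic-node case (clearing a thick-column to reach a Cauchy minor) and the reduction to the extremal $|\G| = B-\al$, both of which the paper treats implicitly or dismisses as easy.
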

%
\begin{proof}
First, it is easy to see that $H$ is non-singular, as it is a $B\times B$ Cauchy matrix with some of its columns scaled by $\omega$. Thus, we have $\Bs = B$. 

Next, we want to show that matrix $T = \twomatrix{H_{\G}}{G_e}$ is full-rank, where $G_e$ is the generator matrix of the observed node and $H_{\G}$ consists of $B - \alpha$ rows of $H$ (see Lemma~\ref{lem:silva}). Consider the case when Eve observes one of the systematic nodes. Since $H'$ is a Cauchy matrix, any of its square sub-matrices is full-rank. Using this property, it is easy to show that $T$ will be full-rank. 

Suppose Eve observes a parity node $j$, $k+1\leq j\leq n$. Note that, for any parity node $k+1\leq j\leq n$, one can easily show that the $\al\times\al$ block $\Gij{j}{1}$ is full-rank as follows. Suppose that a data collector downloads from parity node $j$, $k+1\leq j\leq n$, and systematic nodes 2 through $k$. Since any $k$ out of $n$ nodes allow reconstructing the set of $B$ files, $\Gij{j}{1}$ must be full-rank.

Since $G_{j,1}$ is full-rank, we pre-multiply $G_j$ in $T$ by $G_{j,1}^{-1}$. Then, by multiplying each of the first $\alpha$ columns by $\omega$, we can transform $T$ to the matrix $T'$ shown in~\eqref{eq:T-for-l-1}, for some $\pij{i}{j}\in\GF{\qr}$, $1\leq i \leq \al, 1\leq j\leq B-\al$. 


Now, the determinant of $T'$ can be written as a polynomial in $\omega$ as follows:
\begin{equation}
\label{eq:det-T}
det(T') = \left[det\left(H'_{\G}(\al+1:B)\right)\right]{\omega}^{\al} + \cdots,
\end{equation}
where $H'_{\G}(\al+1:B)$ is the $(B-\al)\times(B-\al)$ sub-matrix of $H'_{\G}$ formed by its last $B-\al$ columns. Since $H'$ is a Cauchy matrix, $det\left(H'_{\G}(\al+1:B)\right) \ne 0$. Hence, $det(T')$ is a non-zero polynomial in $\omega$ with coefficients in $\GF{\qr}$, \ie, $det(T'') \in \GF{q}[\omega]$. Further, $deg(det(T'')) = \alpha$. Since, $\omega$ is a primitive element of $\GF{\qr^{\alpha+1}}$, it cannot be a root of a degree $\alpha$ polynomial in $\GF{\qr}[\omega]$. Therefore, $det(T') \neq 0$, and it follows that $T$ is full-rank.
\end{proof}

{\it Field Size Comparison:} The universal outer code in~\cite{Kurihara:13} based on rank-metric codes achieves $g$-weak security for any $\ell$ and maximum possible $g$. The field size requirement is $q^{2B}$, where $q$ is the field of the underlying storage code and $B = k\al$ is the total number of information symbols. Since $q\geq n$ for any known MSR code (see \eg,~\cite{YeB:17,Sasidharan:17}, and references therein), the required field size is $O(n^{2k\al})$. 

The field size required for the proposed construction is $\qr^{\al+1}$, where $\qr$ is the smallest power of $q$ greater than or equal to $2B$. Assuming that $q = O(n)$ and $2B = 2k\al > q$, the proposed construction requires the field size of at most $O\left((nk\al)^{\al+1}\right)$. Note that for high-rate MSR codes, the best known codes have $\al$ to be exponential in $k$, and it is shown that $\al$ needs to be at least exponential in $\sqrt{k}$~\cite{Goparaju:14}. When $\al$ is exponential in $k$, one can verify that the proposed scheme requires a smaller field size than the rank-metric code based scheme of~\cite{Kurihara:13} for a wide range of parameters.

\section{Conclusion}
\label{sec:conclusion}
We focused on the weak security paradigm in which Eve gains no information about any group of $g$ symbols. We proposed a universal outer code that can weakly secure any MSR code. In particular, we considered two scenarios: (i) the eavesdropper has the maximum strength of $\ell = k - 1$, and the weak security level is the minimum $g = 1$; and (ii) the eavesdropper has the minimum strength of $\ell = 1$, but the weak security level is the maximum possible $g = B - \al$. Our key idea is to utilize the structure present in the (systematic) generator matrix of an MSR code to construct the outer code. This enabled us to reduce the required field size compared to the standard approaches based on rank-metric codes.

\bibliographystyle{IEEEtran}
\bibliography{IEEEabrv,Universal_secrecy}
%
%
%

\end{document}